\newif\ifextended
\let\MYcaption\@makecaption
\let\@makecaption\MYcaption
\newtheorem{theorem}{Theorem}
\newtheorem{lemma}{Lemma}
\newcommand{\remove}[1]{}
\renewcommand{\epsilon}{\varepsilon}
\definecolor{user1}{rgb}{0.8,1,1}
\definecolor{user2}{rgb}{1,0.8,1}
  \newcommand{\appref}[1]{Appendix~\ref{#1}}
  \newcommand{\appreffirst}{the Appendices}
  \newcommand{\appref}[1]{\cite{extended}}
  \newcommand{\appreffirst}{the extended version of this paper \cite{extended}}
\begin{document}

\title{A Layered Caching Architecture\\for the Interference Channel}

\author{%
\IEEEauthorblockN{Jad Hachem}
\IEEEauthorblockA{University of California, Los Angeles\\
Email: jadhachem@ucla.edu}
\and
\IEEEauthorblockN{Urs Niesen}
\IEEEauthorblockA{Qualcomm NJ Research Center\\
Email: urs.niesen@ieee.org}
\and
\IEEEauthorblockN{Suhas Diggavi}
\IEEEauthorblockA{University of California, Los Angeles\\
Email: suhas@ee.ucla.edu}
\ifextended
\thanks{A shorter version of this paper is to appear in IEEE ISIT 2016.}
\fi
\thanks{This work was supported in part by NSF grant \#1423271.}
}

\maketitle

\begin{abstract}
Recent work has studied the benefits of caching in the interference channel, particularly by placing caches at the transmitters.
In this paper, we study the two-user Gaussian interference channel in which caches are placed at both the transmitters and the receivers.
We propose a separation strategy that divides the physical and network layers.
While a natural separation approach might be to abstract the physical layer into several \emph{independent} bit pipes at the network layer, we argue that this is inefficient.
Instead, the separation approach we propose exposes \emph{interacting} bit pipes at the network layer, so that the receivers observe related (yet not identical) quantities.
We find the optimal strategy within this layered architecture, and we compute the degrees-of-freedom it achieves.
Finally, we show that separation is optimal in regimes where the receiver caches are large.

\end{abstract}

\section{Introduction}
\label{sec:intro}







Traditional communication networks are connection centric, i.e., they
establish a reliable connection between two fixed network nodes.
However, instead of a connection to a specific destination node, modern
network applications often require a connection to a specific piece of
content. Consequently, network architectures are shifting from being
connection centric to being content centric. These content-centric
architectures make heavy use of in-network caching and, in order to do so,
redesign the protocol stack from the network layer upwards~\cite{jacobson2012}.

Recent work in the information theory literature indicates that the
availability of in-network caching can also benefit the physical layer.
This information-theoretic approach to caching was introduced in the
context of the noiseless broadcast channel in \cite{maddah-ali2012},
where it was shown that significant performance gains can be obtained
using cache memories at the \emph{receivers}. The setting was extended
to the interference channel in \cite{maddah-ali2015interference}, which
presented an achievable scheme showing performance gains using cache
memories at the \emph{transmitters}. The achievable scheme from
\cite{maddah-ali2015interference} uses the cache memories to create many
virtual transmitters and improves transmission rate by performing
elaborate interference alignment between those virtual transmitters.

\begin{figure}
\centering
\includegraphics[width=.45\textwidth]{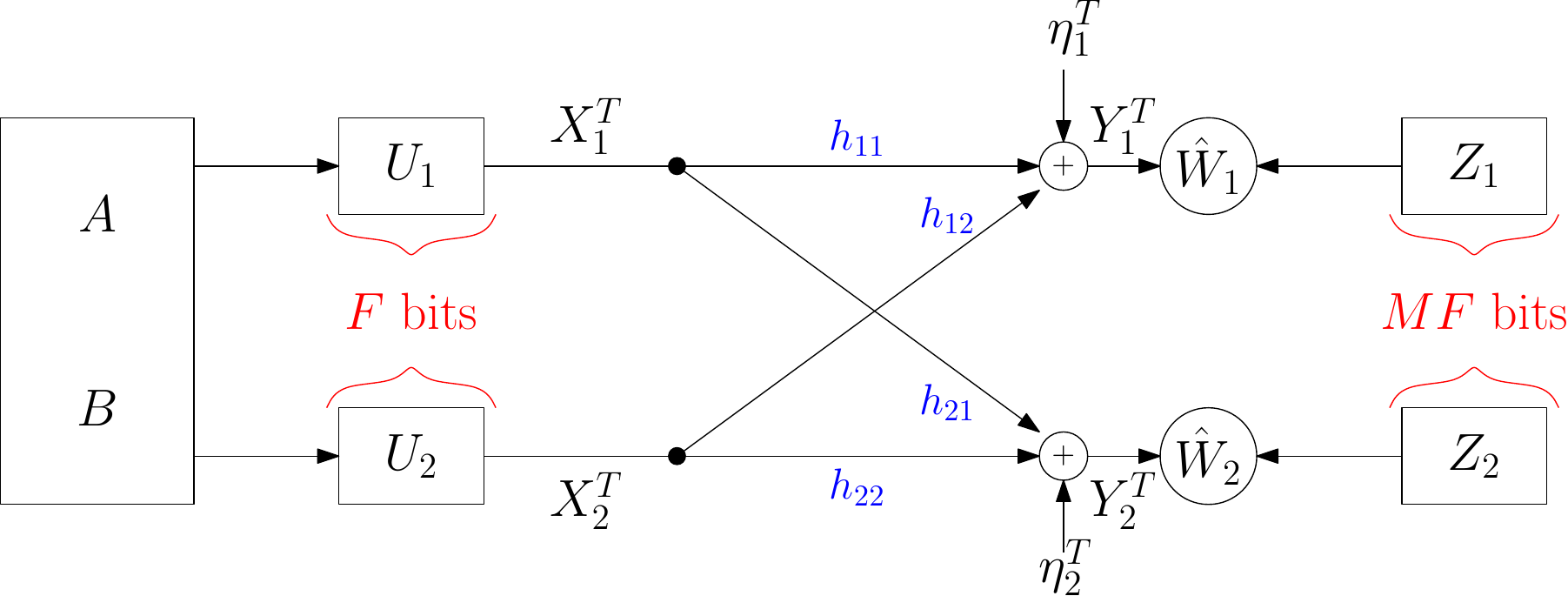}
\caption{The caching problem over the interference channel.
The server holds files $A$ and $B$, of size $F$ bits each, and caches parts of them in four memories $U_1$, $U_2$, $Z_1$, and $Z_2$.
The two users (circles) request files $W_1,W_2\in\{A,B\}$, and aim to recover them using the output of the interference channel and their respective caches.}
\label{fig:setup-gaussian}
\end{figure}

\begin{figure}
\centering
\begin{subfigure}{.45\textwidth}
\centering
\includegraphics[width=.8\textwidth]{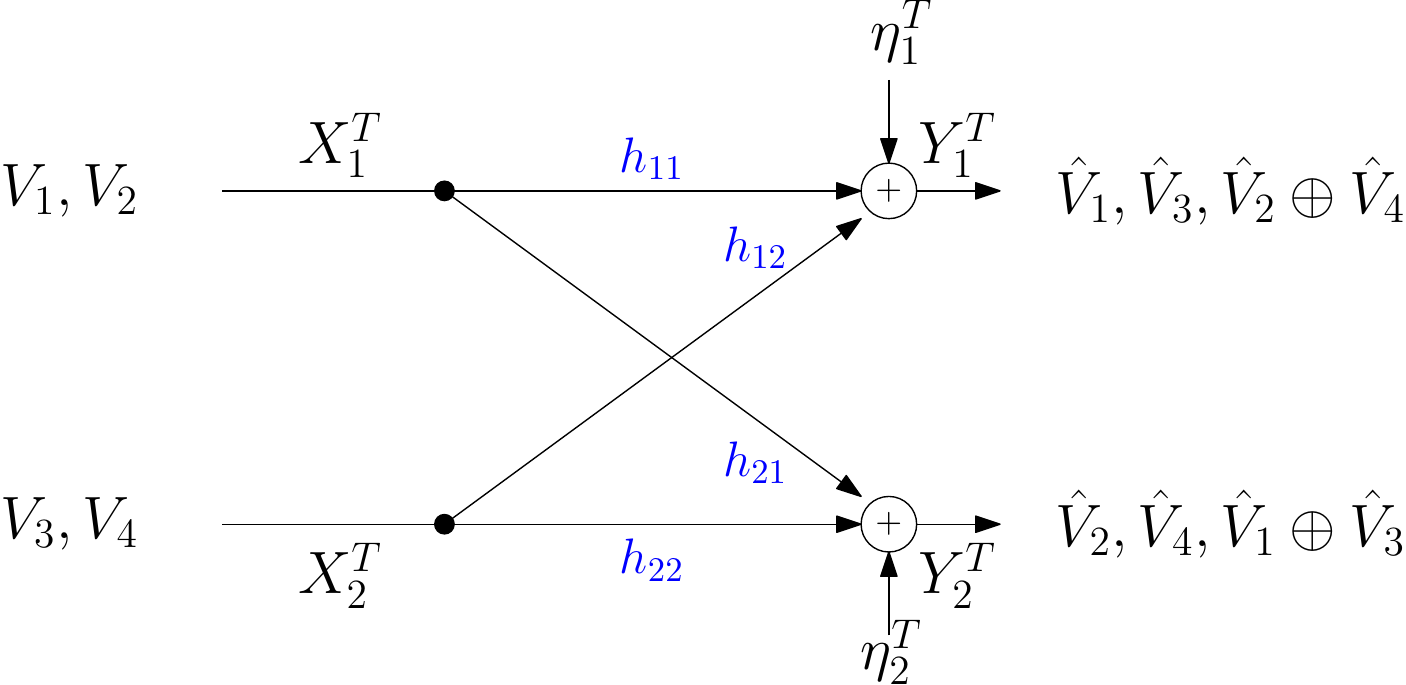}
\caption{Physical-layer view.}
\label{fig:physical-layer}
\end{subfigure}

\begin{subfigure}{.45\textwidth}
\centering
\includegraphics[width=\textwidth]{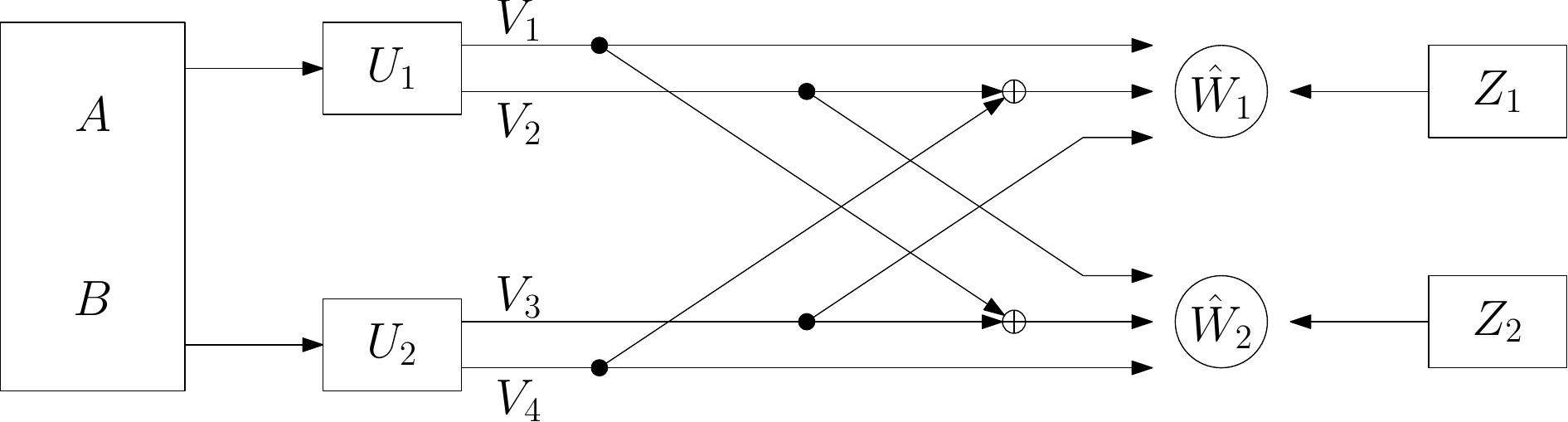}
\caption{Network-layer view.}
\label{fig:setup-deterministic}
\end{subfigure}

\caption{Physical and network layers of the system in \figurename~\ref{fig:setup-gaussian} under the proposed separation architecture.}
\label{fig:separation}
\end{figure}

In this paper we continue the study of cache-aided interference
channels, but we allow for caches at both the \emph{transmitters and
receivers} as shown in \figurename~\ref{fig:setup-gaussian}.
Furthermore, we propose a simpler, layered communication architecture,
separating the problem into a physical layer and a network layer as shown
in \figurename~\ref{fig:separation}. In
other words, we propose a redesign of the protocol stack from the
network layer downwards.

There are two seemingly natural network-layer abstractions for this
problem. The first treats the physical layer as a standard interference
channel and transforms it into two noninteracting error-free bit pipes.
The second treats the physical layer as an X-channel and transforms it
into four noninteracting error-free bit pipes. We argue that both of
these abstractions are deficient. Instead a more appropriate
abstraction needs to expose some of the algebraic structure of the
underlying physical layer to the network layer. More precisely, we
propose a network-layer abstraction consisting of four
\emph{interacting} error-free bit pipes as illustrated in
\figurename~\ref{fig:setup-deterministic}.

We derive optimal communication schemes for this layered architecture.
An interesting feature of these schemes is that they require coding
during both the content placement and delivery phases (whereas the
caching schemes studied in the prior literature utilize coding for only
one or the other). For the regime of large cache sizes, it turns out that
the layered architecture itself is fundamental, i.e., that the
separation of the communication problem into the two proposed layers is
without loss of optimality.

\paragraph*{Related work}

The information-theoretic framework for coded caching was introduced in
\cite{maddah-ali2012} in the context of the deterministic broadcast
channel.  This has been extended to online caching systems \cite{PMN13},
heterogeneous cache sizes \cite{wang2015fundamental},
unequal file sizes \cite{zhang2015filesize}, and improved converse
arguments \cite{ghasemi2015improved, sengupta2015improved}. Content
caching and delivery in device-to-device networks, multi-server
topologies, and heterogeneous wireless networks have been studied in
\cite{ji2013wireless,shariatpanahi2015multi,FemtoCaching,MLcodedcaching,HKDinfocom15}.
This framework has also been extended to hierarchical (tree) topologies in
\cite{Hcodedcaching}. More recently, it has been extended to
interference channels in \cite{maddah-ali2015interference}, where only
transmit caches were considered and several interesting schemes were
developed.

The paper is organized as follows. Section~\ref{sec:setup} provides the
formal problem setting and introduces the proposed layered communication
architecture.  Section~\ref{sec:results} presents a complete performance
characterization for this architecture.  A detailed description of the
network-layer processing together with optimality proofs are given in
\appreffirst.

\section{Problem Setting and Layered Architecture}
\label{sec:setup}
We study a system in which a server delivers files to two users across a Gaussian interference channel with the help of caches at all network nodes, described in Section~\ref{sec:setup-caching}.
We propose a layered communication architecture consisting of a physical and a network layer.
We introduce the physical layer in Section~\ref{sec:setup-physical} and the network layer in Section~\ref{sec:setup-network}.

In this paper, we restrict the number of files to just two.
We also fix the size of the \emph{transmitter} caches to the smallest size required for normal operation of the system.
This allows us to study the transmission rate of the files as a function of the receiver cache memory, without worrying about additional complexities arising from larger transmitter caches and a greater number of files.
In fact, the results turn out to be rather complex even in this simplified setting.
Extensions to this setup are a work in progress.

\subsection{The caching problem}
\label{sec:setup-caching}

Consider the setup in \figurename~\ref{fig:setup-gaussian}.
A server has two files $A$ and $B$ of size $F$ bits each.
The server is connected to two transmitters.
These in turn are connected to two receivers, referred to as users, through a Gaussian interference channel.

Communication occurs in two phases.
In the \emph{placement phase}, the server pushes file information to four caches denoted by $U_1$, $U_2$, $Z_1$, and $Z_2$.
Caches $U_1$ and $U_2$ are at transmitters 1 and 2, respectively, and can each store up to $F$ bits.
Caches $Z_1$ and $Z_2$ are at receivers 1 and 2, respectively, and can each store up to $MF$ bits.
The parameter $M\ge0$ is called the (normalized) \emph{memory size}.

In the subsequent \emph{delivery phase} each user requests one of the files (possibly the same).
Formally, users~1 and~2 request files $W_1,W_2\in\{A,B\}$, respectively, from the server.
Each transmitter $i$ then sends a length-$T$ sequence $X_i^T$ through the interference channel.
This message $X_i^T$ can depend only on the transmitter's cache content $U_i$.
In other words, the server itself does not participate in the delivery phase.

We impose a power constraint
\[
\frac1T \sum_{t=1}^T X_{i,t}^2 \le P
\]
on the transmitted sequence $X_i^T$.
Each receiver $i$ observes the channel output
\[
Y_i^T = h_{i1}X_1^t + h_{i2}X_2^T + \eta_i^T
\]
of the interference channel, where $\eta_{i,t}\sim\mathcal{N}(0,1)$ is iid additive Gaussian noise.
The receiver combines the channel output $Y_i^T$ with the cache content $Z_i$ to decode the requested file $\hat W_i$.

We define the \emph{rate} of the system as $R=F/T$.
Our goal is to characterize the trade-off between the rate $R$ and the receiver cache memory $M$ under the power constraint $P$.
Formally, we say that a tuple $(R,M,P)$ is achievable if there exists a strategy with rate $R$, receiver cache memory $M$, and power constraint $P$ such that
\[
\Pr\left\{ (\hat W_1,\hat W_2)\not=(W_1,W_2) \right\}
\to 0
\text{ as $T\to\infty$}
\]
for all possible demands $(W_1,W_2)\in\{A,B\}^2$.
We define the optimal rate function as
\[
R^*(M,P) = \sup\left\{ R : (R,M,P) \text{ is achievable} \right\}.
\]
We are particularly interested in the high-SNR regime and focus on the optimal degrees of freedom (DoF)
\[
d^*(M) = \lim_{P\to\infty} \frac{R^*(M,P)}{\frac12\log P}
\]
of the system.
It will be convenient to work with the inverse-DoF $1/d^*(M)$, since it is a convex function of $M$ \cite[Lemma~1]{maddah-ali2015interference}.

\subsection{Physical-layer view}
\label{sec:setup-physical}

We next describe the physical-layer view of the caching problem.
There are several possible strategies of how to perform the layer separation, each leading to a different physical-layer view.
We start by describing the advantages and disadvantages of some of them.

A natural strategy might be the complete separation of the physical and network layers, abstracting the physical channel into parallel error-free bit pipes.
This can be achieved by treating the physical layer as a standard interference channel (IC) or a standard X-channel (XC).
The IC abstraction gives the network layer two independent bit pipes, each of them providing a DoF of $1/2$ for a sum DoF of $1$.
The XC abstraction can do slightly better by creating four bit pipes of DoF $1/3$ each for a sum DoF of $4/3$.
However, by ``relaxing'' the separation, we can provide to the network layer the same four bit pipes of the XC, but with two additional linear combinations of these bit pipes.
This can improve the performance of the system as soon as caches are available at the receivers.
For example, if each receiver cache can store up to four fifths of a file, then we show below that a sum DoF of $10/3$ can be achieved, compared with $20/9$ for the XC and $5/3$ for the IC for the same memory value.

The physical-layer view of the caching problem adopted in this paper is therefore the Gaussian interference channel together with an X-channel message set, i.e., four messages $V_1, \dots, V_4$, one to be sent from each transmitter to each receiver as shown in \figurename~\ref{fig:physical-layer}.
The physical layer applies real interference alignment \cite{motahari2014alignment} in order to, loosely speaking, allow recovery of the following quantities:
\begin{itemize}
\item Receiver~1 recovers $V_{1}$, $V_{3}$, and $V_{2}+V_{4}$;
\item Receiver~2 recovers $V_{2}$, $V_{4}$, and $V_{1}+V_{3}$.
\end{itemize}

More formally, real interference alignment is a modulation scheme that uses a one-dimensional lattice to create channel inputs $G_i^T$ corresponding to message $V_i$.
Each transmitter then creates the channel inputs
\begin{IEEEeqnarray*}{rCl}
X_1^T &=& h_{22} G_{1}^T + h_{12} G_{2}^T;\\
X_2^T &=& h_{21} G_{3}^T + h_{11} G_{4}^T.
\end{IEEEeqnarray*}
At the channel output, the following signals will be received:
\begin{IEEEeqnarray*}{rCl}
Y_1^T &=& h_{11}h_{22}G_{1}^T + h_{12}h_{21}G_{3}^T + h_{11}h_{12}\left( G_{2}^T+G_{4}^T \right) + \eta_1^T;\\
Y_2^T &=& h_{12}h_{21}G_{2}^T + h_{11}h_{22}G_{4}^T + h_{21}h_{22}\left( G_{1}^T+G_{3}^T \right) + \eta_2^T.
\end{IEEEeqnarray*}

Using the lattice structure of the $G_i^T$'s, user~$1$ can demodulate $G_{1}^T$, $G_{3}^T$, and $G_{2}^T+G_{4}^T$, while user~$2$ can demodulate $G_{2}^T$, $G_{4}^T$, and $G_{1}^T+G_{3}^T$.
Using a linear block code over this modulated channel as proposed in \cite{niesen_whiting2012}, we can ensure that user~$1$ can decode $V_{1}$, $V_{3}$, and $V_{2}\oplus V_{4}$, while user~$2$ can decode $V_{2}$, $V_{4}$, and $V_{1}\oplus V_{3}$.
The addition $\oplus$ here is over some finite field.
For the purposes of this paper, we can assume that this field is $\mathrm{GF}_2$.
If $V_{i}\in[2^{R'T}]$, then the rate
\begin{equation}
\label{eq:rprime}
R' = \frac13 \cdot \frac12\log P + o(\log P),
\end{equation}
corresponding to a DoF of $1/3$, is achievable \cite{motahari2014alignment}.

\subsection{Network-layer view}
\label{sec:setup-network}

The network-layer abstraction replaces the noisy interference channel in \figurename~\ref{fig:setup-gaussian} with the noiseless channel in \figurename~\ref{fig:setup-deterministic}.
The transmitters send four messages $V_1,\ldots,V_4$, each of size $cF$ bits, such that $V_1$ and $V_2$ are sent by transmitter 1 and depend only on $U_1$, and $V_3$ and $V_4$ are sent by transmitter 2 and depend only on $U_2$.
The messages go through the channel and the users receive the following outputs (the symbol ``$\oplus$'' denotes bitwise XOR):
\begin{itemize}
\item
User 1 receives $V_1$, $V_3$, and $V_2\oplus V_4$;
\item
User 2 receives $V_2$, $V_4$, and $V_1\oplus V_3$.
\end{itemize}
The quantity $c$ is called the (normalized) \emph{network load}.
The sum of all the network loads $\rho=4c$ is called the \emph{sum network load}.

A pair $(M,\rho)$ is said to be achievable if a strategy with receiver memory $M$ and sum network load $\rho$ can deliver any requested files to the two users with high probability as the file size $F\to\infty$.
For a cache memory $M$, we call the smallest achievable sum network load $\rho^*(M)$.

\section{Performance Analysis}
\label{sec:results}

We start with a complete characterization of the network layer trade-off in Section~\ref{sec:performance-network}.
We then translate this to the end-to-end system in Section~\ref{sec:gaussian-achievability}, giving an achievability result for the original interference channel.
We also show that in the high-memory regime, our layered architecture is end-to-end optimal.

\subsection{Network-layer performance analysis}
\label{sec:performance-network}

\begin{figure}
\centering
\includegraphics[width=.35\textwidth]{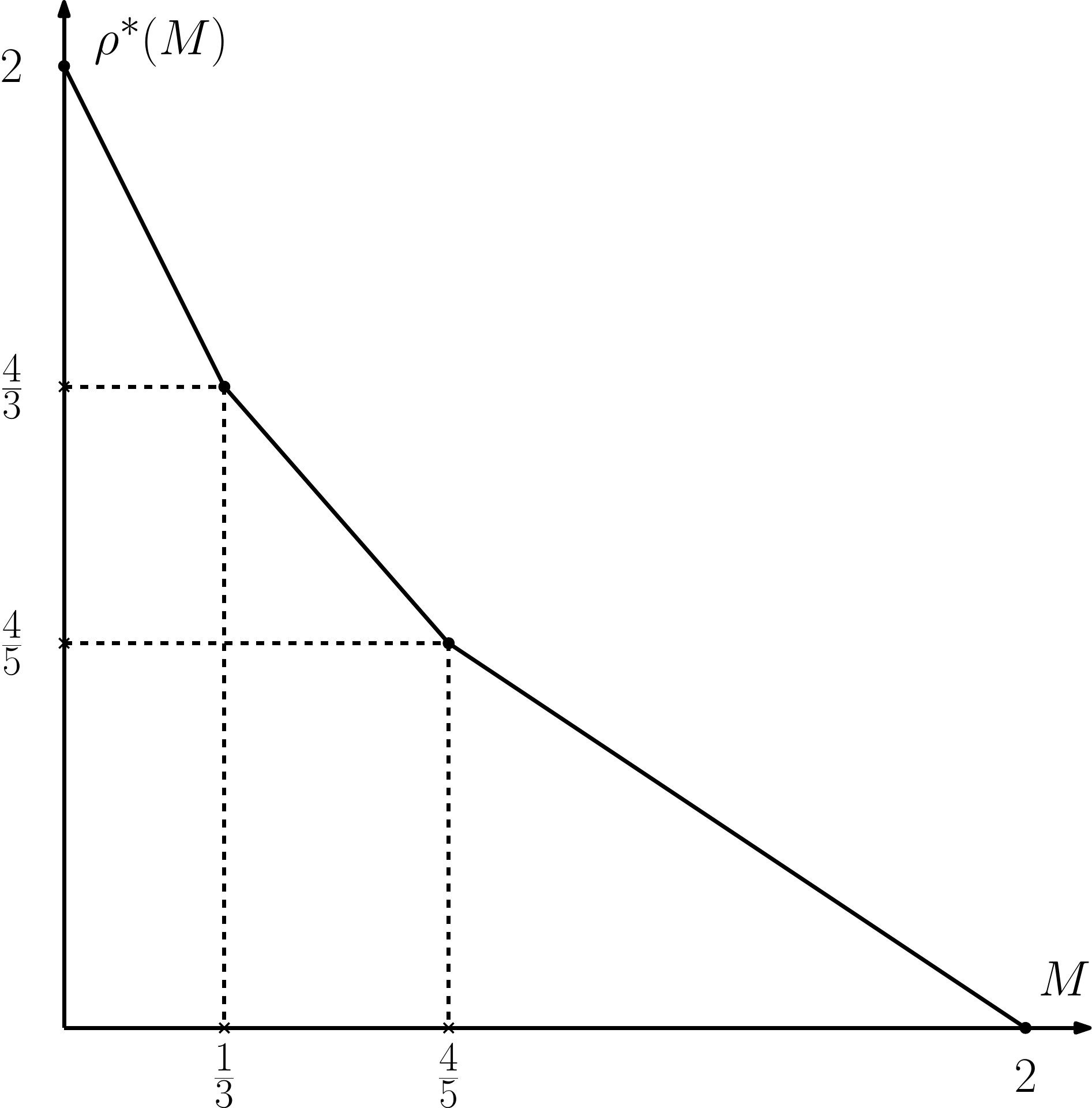}
\caption{Optimal trade-off between sum-network load $\rho$ and receiver memory size $M$ at the network layer.}
\label{fig:deterministic-rate}
\end{figure}

The following theorem, visualized in \figurename~\ref{fig:deterministic-rate}, gives a full characterization of the optimal sum network load $\rho^*(M)$ as a function of receiver cache memory $M$.
\begin{theorem}
\label{thm:deterministic}
At the network layer, the optimal trade-off between $\rho$ and $M$ is:
\begin{equation}
\label{eq:deterministic-optimal}
\rho^*(M) = \max\left\{
2-2M,
\frac{12}{7}-\frac87M,
\frac43-\frac23M,
0
\right\}.
\end{equation}
\end{theorem}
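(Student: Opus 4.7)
\emph{Proof plan.} I would prove achievability and converse separately, matching at the four corner points $(M,\rho)\in\{(0,2),\,(1/3,4/3),\,(4/5,4/5),\,(2,0)\}$ that define the piecewise-linear right-hand side of \eqref{eq:deterministic-optimal}.

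\textbf{Achievability.} Since memory-sharing makes the achievable $(M,\rho)$ region convex, it suffices to exhibit a scheme at each corner. The extreme corners are immediate: $M=0$ requires $c=1/2$ with the four pipes carrying the two files directly, and at $M=2$ each receiver cache holds both files so $\rho=0$. For the interior corner $(1/3,4/3)$, I would split each file into three equal sub-files, use uncoded placement putting one sub-file of $A$ and one of $B$ in each receiver cache, and design the $V_i$'s (each of size $F/3$) so that the direct pipes deliver missing sub-files while the interaction pipes $V_2\oplus V_4$ and $V_1\oplus V_3$ carry XORs from which each user extracts its remaining sub-file via cache-aided cancellation. For the high-memory corner $(4/5,4/5)$ the scheme requires \emph{coded} placement: each receiver cache stores $F/5$-sized XOR combinations of sub-files drawn from both $A$ and $B$, and the delivery uses the interaction pipes to simultaneously route the last missing sub-file of each requested file to the corresponding user.

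\textbf{Converse.} Let $V^d=(V_1^d,\dots,V_4^d)$ denote the messages sent under demand $d\in\{A,B\}^2$, where $V_1^d,V_2^d$ depend only on the transmitter cache $U_1$ and $V_3^d,V_4^d$ only on $U_2$; by construction $H(V_i^d)\le cF$ and $H(Z_i)\le MF$. I would prove three lower bounds that, together with the trivial $\rho\ge 0$, match the maximum in \eqref{eq:deterministic-optimal}. The bound $\rho\ge 2-2M$ is a single-demand cut-set at $d=(A,B)$: the pair $(A,B)$ is recoverable from $(Z_1,Z_2,V^d)$, giving $2F\le 2MF+4cF$. The bound $\rho\ge 4/3-2M/3$ follows by considering user~$1$ alone across the two conflicting demands $d_1=(A,B)$ and $d_2=(B,A)$: this user decodes $A$ in the first and $B$ in the second using $Z_1$ together with three rate-$cF$ observations per demand, so $2F\le MF+6cF$.

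The main obstacle is the third bound $\rho\ge 12/7-8M/7$, which is tight in the middle regime $M\in[1/3,4/5]$ and is neither a convex combination of the other two nor a consequence of any naive cut-set. Its derivation requires a multi-demand entropy inequality of the form $3F\le 2MF+7cF$, which I would construct by summing decoding constraints across several demand/user pairs, applying submodularity of entropy to avoid double-counting the receiver caches $Z_1,Z_2$, and exploiting the XOR structure through the fact that the interaction pipe $V_2^d\oplus V_4^d$ contributes at most $cF$ on top of the direct pipes. Identifying the right set of demand/user pairs and the right weighted combination is the technically delicate step; the detailed argument appears in the appendix.
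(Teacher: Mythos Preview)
Your overall skeleton---four corner points for achievability, three linear lower bounds for the converse---matches the paper exactly, and your converse sketch (cut-set for the outer two bounds, a submodularity/multi-demand argument for $7c+2M\ge 3$) is on the right track. But your achievability descriptions for the two interior corners are \emph{swapped} in a way that matters.

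At $(M,\rho)=(1/3,4/3)$ each receiver cache holds only $F/3$ bits, so ``uncoded placement putting one sub-file of $A$ and one of $B$ in each receiver cache'' is infeasible: that is $2F/3$ bits. The paper's scheme here uses \emph{coded} receiver placement, $Z_1=A_1\oplus B_1$ and $Z_2=A_2\oplus B_2$, so that a single $F/3$-sized XOR serves both possible requests. Conversely, at $(M,\rho)=(4/5,4/5)$ the paper's receiver placement is \emph{uncoded}, $Z_1=(A_1,A_2,B_1,B_2)$ and $Z_2=(A_3,A_4,B_3,B_4)$; the coding there lives in the \emph{transmitter} caches (terms like $B_5\oplus S_i$, $A_5\oplus T_j$) and in the delivery, not in the $Z_i$'s. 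So your intuition that one corner needs coded placement and the other does not is correct, but you have them reversed, and the reversal at $M=1/3$ actually violates the memory constraint.

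For the hard converse bound $7c+2M\ge 3$, your plan is close but misses the specific asymmetry that makes the count come out to seven: the paper combines the three \emph{outputs} seen by user~1 under demand $(A,B)$ with the four \emph{inputs} under demand $(B,A)$ plus both caches, then uses $H(Z_1,\mathbf{Y}^{AB}\mid A)+H(Z_2,\mathbf{V}^{BA}\mid A)\ge H(\text{joint}\mid A)\ge H(B\mid A)$ after extracting two copies of $A$. The mix of one output-side cut and one input-side cut is the step you would need to discover; ``summing decoding constraints across several demand/user pairs'' is not yet specific enough to get there.
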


Proving this theorem requires the usual two steps: finding a scheme that achieves the right-hand-side in \eqref{eq:deterministic-optimal}, and proving matching lower bounds.
In the lower bounds, a non-cut-set inequality is needed to show optimality, as cut-set bounds (see \cite[Theorem 15.10.1]{cover}) alone are insufficient.
The details of both the achievability and the lower bounds are given in \appref{sec:deterministic}, but we will here give a brief overview of the ideas involved.

\paragraph*{Overview of the achievability}
The expression in \eqref{eq:deterministic-optimal} is a piece-wise linear function, with the following $(M,\rho)$ corner points: $(0,2)$, $(1/3,4/3)$, $(4/5,4/5)$, and $(2,0)$.
By time and memory sharing, the function $\rho^*(M)$ is a convex function of $M$.
Hence, it suffices to prove that the above corner points are achievable.
We will here give a high-level overview of the scheme for the two most interesting points: $(1/3,4/3)$ and $(4/5,4/5)$.
We will consider only the demand pair $(A,B)$ in the delivery phase, as the others are similar.

\begin{figure}
\centering
\includegraphics[width=.45\textwidth]{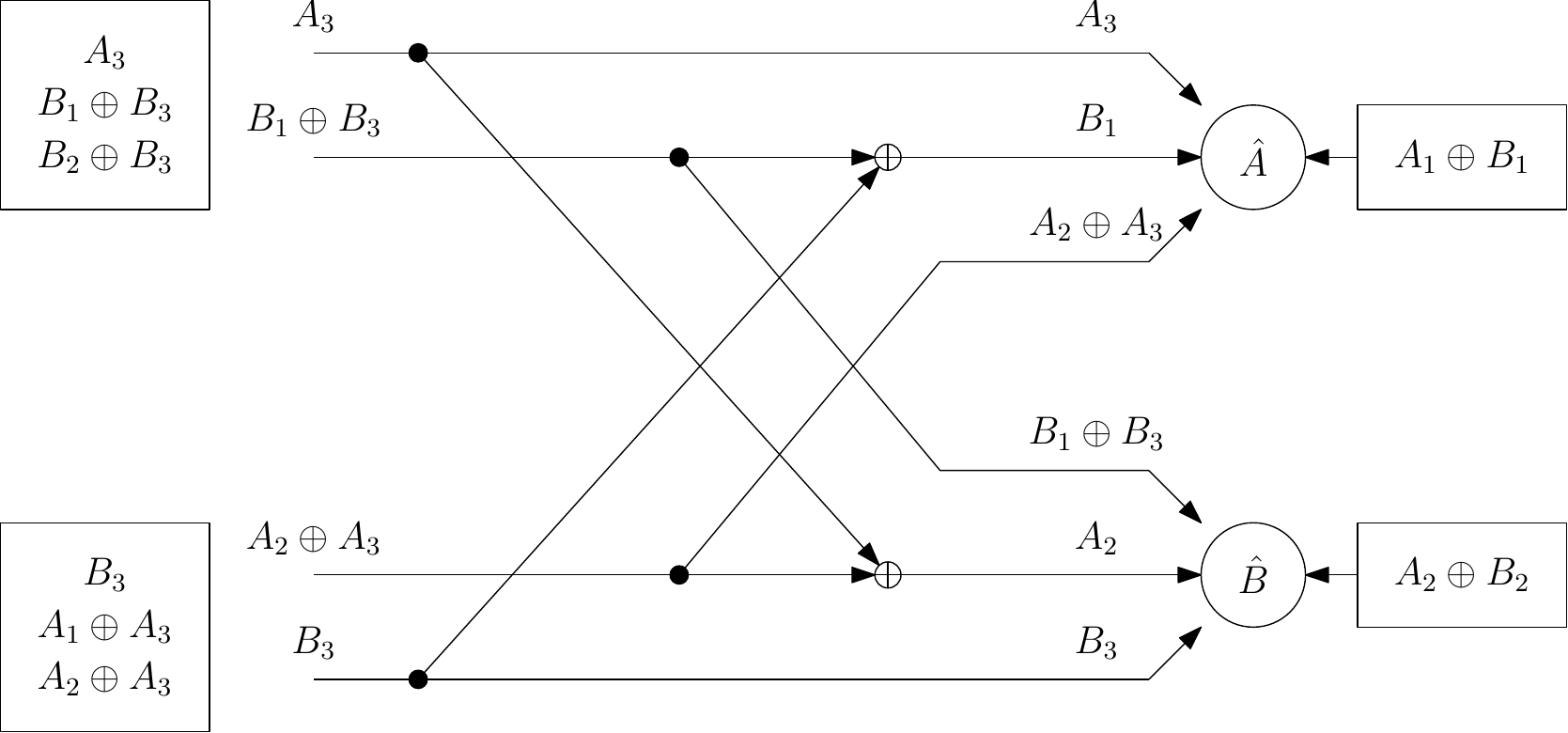}
\caption{Achievable strategy for $M=1/3$ when the demands are $(A,B)$.}
\label{fig:strategy-m13}
\end{figure}

The strategy for point $(M,\rho)=(1/3,4/3)$ is illustrated in \figurename~\ref{fig:strategy-m13}.
If $M=1/3$, then each receiver cache can store the equivalent of one third of a file.
We split each file into three parts: $A=(A_1,A_2,A_3)$ and $B=(B_1,B_2,B_3)$, and store $Z_1=(A_1\oplus B_1)$ and $Z_2=(A_2\oplus B_2)$ at the receivers.
The transmitter caches store $U_1=(A_3,B_1\oplus B_3,B_2\oplus B_3)$ and $U_2=(B_3,A_1\oplus A_3,A_2\oplus A_3)$.
Notice that the contents of $U_1$ and $U_2$ are independent, and that each has a size of $F$ bits.

Suppose now that user~1 requests $A$ and user~2 requests $B$.
Then the transmitters send the messages
\begin{IEEEeqnarray*}{rCl"rCl}
V_1 &=& A_3; &
V_2 &=& B_1\oplus B_3;\\
V_3 &=& A_2\oplus A_3; &
V_4 &=& B_3.
\end{IEEEeqnarray*}
Each $V_i$ has a size of $F/3$ bits, so $c=1/3$ and $\rho=4c=4/3$.

User~1 receives
\begin{IEEEeqnarray*}{rCl}
(V_1,V_2\oplus V_4,V_3)
&=& \left( A_3, (B_1\oplus B_3)\oplus B_3, A_2\oplus A_3 \right)\\
&=& \left( A_3, B_1, A_2\oplus A_3 \right).
\end{IEEEeqnarray*}
The user can recover $A_2$ from $A_2\oplus A_3$ and $A_3$, and decode $A_1$ from $B_1$ and the cache content $A_1\oplus B_1$.
Thus, user~1 has completely recovered file $A$.
User~2 applies a similar approach to decode file $B$.

Note that the transmitters take advantage of the contents of the receiver caches to send a reduced amount of information to the users.
Specifically, they need to communicate $A_3$ to user~1, $B_3$ to user~2, and both $A_2$ and $B_1$ to both users.
Using a similar strategy for the other demands, we see that the point $(M,\rho)=(1/3,4/3)$ is achievable.

\begin{figure}
\centering
\includegraphics[width=.45\textwidth]{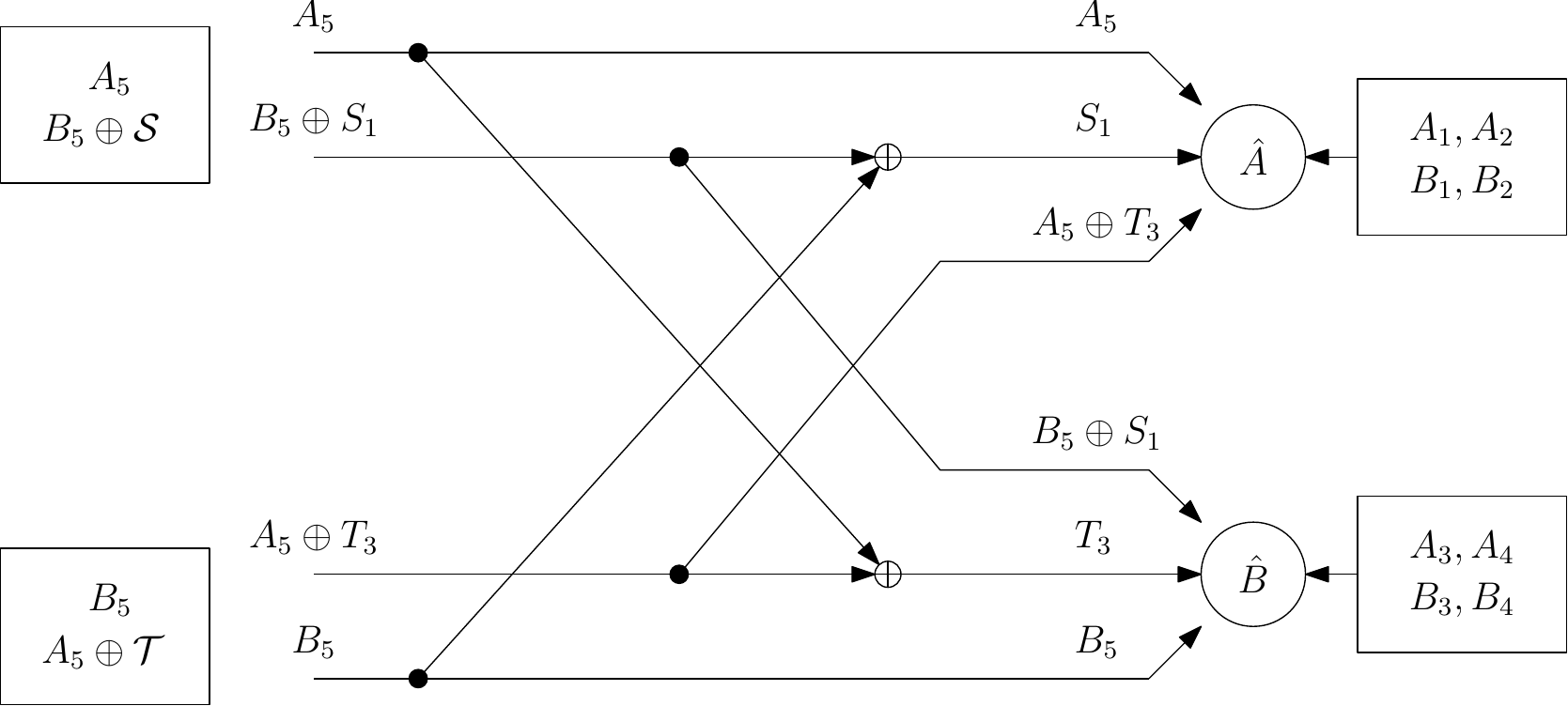}
\caption{Achievable strategy for $M=4/5$ when the demands are $(A,B)$.
We write $\mathcal{S}=\{S_1,S_2,S_3,S_4\}$ and $\mathcal{T}=\{T_1,T_2,T_3,T_4\}$, where $S_1=B_2\oplus A_4$ and $T_3=B_1\oplus A_3$ (the others are not used for demands $(A,B)$).}
\label{fig:strategy-m45}
\end{figure}

Let us now consider point $(M,\rho)=(4/5,4/5)$, whose strategy is visualized in \figurename~\ref{fig:strategy-m45}.
When $M=4/5$, each receiver can store the equivalent of four fifths of a file.
We thus divide each file into five equal parts, $A=(A_1,\ldots,A_5)$ and $B=(B_1,\ldots,B_5)$, and place $Z_1=(A_1,A_2,B_1,B_2)$ and $Z_2=(A_3,A_4,B_3,B_4)$ in the receiver caches.
The transmitter caches, which have a capacity of one file each, store $U_1=(A_5,B_5\oplus \mathcal{S})$ and $U_2=(B_5,A_5\oplus\mathcal{T})$, where $\mathcal{S}$ and $\mathcal{T}$ each consist of four linear combinations
\begin{IEEEeqnarray*}{rClCl}
\mathcal{S} &=& \{S_i\}_{i=1}^4 &=& \{B_2\oplus A_4, A_1\oplus B_3, B_1\oplus B_3, B_2\oplus B_4\},\\
\mathcal{T} &=& \{T_i\}_{i=1}^4 &=& \{A_1\oplus A_3, A_2\oplus A_4, B_1\oplus A_3, A_2\oplus B_4\},
\end{IEEEeqnarray*}
of parts of $A$ and $B$.

Assume the users request files $A$ and $B$, respectively.
The transmitters then send the following messages:
\begin{IEEEeqnarray*}{rClCl}
V_1 &=& A_5;\\
V_2 &=& B_5\oplus S_1 &=& B_5 \oplus \left( B_2\oplus A_4 \right);\\
V_3 &=& A_5\oplus T_3 &=& A_5 \oplus \left( B_1\oplus A_3 \right);\\
V_4 &=& B_5.
\end{IEEEeqnarray*}
Notice that the size of each $V_i$ is $F/5$ bits, so that $c=1/5$ and $\rho=4/5$.

User~1 receives
\begin{IEEEeqnarray*}{rCl}
(V_1,V_2\oplus V_4,V_3) &=& (A_5,B_5\oplus B_5\oplus S_1,A_5\oplus T_3)\\
&=& \left( A_5 , S_1, A_5\oplus T_3 \right).
\end{IEEEeqnarray*}
Recall that user~1's cache already contains $Z_1=(A_1,A_2,B_1,B_2)$.
This gives it the first two parts of $A$, and it receives the fifth part $A_5$ from the channel.
Furthermore, using $A_5$ and $A_5\oplus T_3$ allows it to decode $T_3=(B_1\oplus A_3)$, and using $B_1$ from its cache it can recover $A_3$.
Finally, it can combine $B_2$ from its cache with $S_1=(B_2\oplus A_4)$ to decode the last part, $A_4$.
User~1 has therefore completely recovered file $A$, and in a similar manner user~2 can recover file $B$.
By transmitting similar linear combinations for the other demands, we can show that the point $(M,\rho)=(4/5,4/5)$ is achieved.

\paragraph*{Overview of the converse}
For the outer bounds, we must show that any achievable pair $(M,\rho)$ must satisfy the following inequalities:
\begin{IEEEeqnarray*}{rCl}
\rho &\ge& 2-2M;\\
\rho &\ge& \frac{12}{7}-\frac87M;\\
\rho &\ge& \frac43-\frac23M.
\end{IEEEeqnarray*}
The first and third inequalities can be proved using cut-set bounds.
We will here focus on the second inequality, which requires a non-cut-set argument.
For convenience, we will write it in terms of $c$ instead of $\rho$, and rearrange it as
\[
7c + 2M \ge 3.
\]

Informally, the proof proceeds as follows.
Consider the three outputs of the channel observed by user~1, and suppose they result from user~1 requesting file $A$ (user~2's request is irrelevant).
Call these outputs collectively $\mathbf{Y}^A$.
Then, these outputs combined with cache $Z_1$ should allow user~1 to decode file $A$.
In parallel, consider the four inputs of the channel when user~1 requests file $B$ and user~2 requests file $A$.
Collectively call these inputs $\mathbf{V}^{BA}$.
Combining these inputs with cache $Z_2$ should allow user~2 to also decode file $A$.

So far, user~1 has used $\mathbf{Y}^A$ with its cache to decode $A$, and user~2 has combined $\mathbf{V}^{BA}$ with its cache to also decode $A$.
These decodings have occurred separately from each other.
Let us now combine everything together (i.e., $\mathbf{V}^{BA}$, $\mathbf{Y}^A$, and the two caches $Z_1$ and $Z_2$).
Then, user~1 should decode the remaining file $B$ using $\mathbf{V}^{BA}$ and its cache $Z_1$.

In summary, we have argued that four input messages $\mathbf{V}^{BA}$ and three output messages $\mathbf{Y}^A$, each of which has a size of $cF$ bits, as well as two caches $Z_1$ and $Z_2$, each of which has a size of $MF$ bits, should contain enough information to decode three files ($A$ twice and $B$ once) of size $F$ bits each.
This can be mathematically expressed as
\[
(4+3)cF + 2MF \ge 3F,
\]
thus proving the inequality.
The complete formal proof can be found in \appref{sec:deterministic}.

\subsection{End-to-end performance analysis}
\label{sec:gaussian-achievability}

Applying the physical-layer processing as described in Section~\ref{sec:setup-physical}, the $R'T$ message bits at the physical layer correspond to the $cF$ message bits in the network layer, i.e., $R'T = cF$.
Since the files $A$ and $B$ have a size of $F=RT$ bits, this implies that $R'=cR$.
Therefore, using \eqref{eq:rprime}, the following rate is achievable:
\[
R = \frac{R'}{c} = \frac{1}{3c}\cdot\frac12\log P + o(\log P).
\]
Equivalently, we can achieve the inverse-DoF of $\frac{1}{d(M)}=3c=\frac34\rho$.
To get the largest DoF possible within this strategy, we want to achieve the smallest possible $\rho$, which leads to the following lemma.
\begin{lemma}
\label{lemma:gaussian-dof}
Using the proposed separation architecture, the following end-to-end inverse-DoF is achievable:
\[
1/d(M) = (3/4) \rho^*(M),
\]
where $\rho^*(M)$ is the optimal trade-off between sum-network-load and cache memory at the network layer.
\end{lemma}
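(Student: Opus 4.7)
The plan is to directly compose the physical-layer coding from Section~\ref{sec:setup-physical} with a network-layer scheme achieving $\rho^*(M)$ from Theorem~\ref{thm:deterministic}, so the proof reduces to a bookkeeping argument matching the two layers' rates.

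First, I would invoke the physical layer. By real interference alignment as described around equation~\eqref{eq:rprime}, for any fixed $R'$ of the form $R' = (1/3)\cdot(1/2)\log P + o(\log P)$, each of the four X-channel messages $V_1,\ldots,V_4$ of size $R'T$ bits can be transmitted such that receiver~1 reliably decodes $V_1$, $V_3$, $V_2\oplus V_4$ and receiver~2 reliably decodes $V_2$, $V_4$, $V_1\oplus V_3$, with error probability vanishing as $T\to\infty$. Second, I would invoke the network layer: by Theorem~\ref{thm:deterministic}, for the given memory $M$ there is a network-layer scheme using four messages $V_1,\ldots,V_4$ of size $cF$ bits each, with $c = \rho^*(M)/4$, that delivers any demand pair $(W_1,W_2)\in\{A,B\}^2$ with vanishing error as $F\to\infty$.

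Third, I would splice the two layers by identifying the physical-layer message $V_i$ with the network-layer message $V_i$, which requires $R'T = cF$. Combined with $F = RT$, this gives $R = R'/c$, and substituting the physical-layer rate yields
\[
R \;=\; \frac{1}{3c}\cdot\frac{1}{2}\log P + o(\log P),
\]
so the achievable inverse-DoF equals $1/d(M) = 3c = (3/4)\,\rho^*(M)$, as claimed.

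The only subtlety, rather than a genuine obstacle, is composing the two vanishing-error guarantees: the network-layer error vanishes as $F\to\infty$ at fixed finite-field message sizes, while the physical-layer error vanishes as $T\to\infty$ at the given $R'$. Choosing, for each $P$, a blocklength $T = T(P)$ large enough that both $F = RT$ is large (so the network-layer decoder succeeds) and the physical-layer demodulation error at rate $R'(P)$ is small, and then letting $P\to\infty$, gives a sequence of codes attaining the stated inverse-DoF. Nothing else is needed, since the matching of the XOR structure between the two layers is by construction.
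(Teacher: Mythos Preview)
Your proposal is correct and follows essentially the same argument as the paper: identify the physical-layer message size $R'T$ with the network-layer message size $cF$, use $F=RT$ to get $R=R'/c$, substitute \eqref{eq:rprime}, and conclude $1/d(M)=3c=(3/4)\rho^*(M)$. Your added paragraph on composing the two vanishing-error guarantees is more explicit than the paper's treatment but not a different approach.
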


A direct combination of Lemma~\ref{lemma:gaussian-dof} and Theorem~\ref{thm:deterministic} leads to the following result.
\begin{theorem}
\label{thm:gaussian-achievability}
The following inverse-DoF is achievable:
\[
\frac{1}{d(M)} = \max\left\{
\frac32-\frac32M,
\frac97-\frac67M,
1-\frac12M,
0
\right\}.
\]
\end{theorem}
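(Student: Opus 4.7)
The plan is to obtain Theorem~\ref{thm:gaussian-achievability} as an immediate corollary of the two results already proved. First I would invoke Lemma~\ref{lemma:gaussian-dof}, which guarantees that the layered architecture achieves an inverse-DoF equal to $(3/4)\,\rho^*(M)$. This reduces the problem to computing $(3/4)\,\rho^*(M)$ explicitly.

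Next I would substitute the characterization of $\rho^*(M)$ given by Theorem~\ref{thm:deterministic}, namely
\[
\rho^*(M) = \max\left\{ 2-2M,\ \tfrac{12}{7}-\tfrac{8}{7}M,\ \tfrac{4}{3}-\tfrac{2}{3}M,\ 0 \right\}.
\]
Since $3/4>0$, the factor distributes across the maximum, so
\[
\tfrac34 \rho^*(M) = \max\left\{ \tfrac34(2-2M),\ \tfrac34\bigl(\tfrac{12}{7}-\tfrac{8}{7}M\bigr),\ \tfrac34\bigl(\tfrac{4}{3}-\tfrac{2}{3}M\bigr),\ 0 \right\}.
\]
A short arithmetic simplification of the four arguments yields $\tfrac32-\tfrac32M$, $\tfrac97-\tfrac67M$, $1-\tfrac12M$, and $0$, matching the claimed expression.

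There is no real obstacle here: the result is purely a composition of the network-layer trade-off with the physical-layer DoF conversion factor $1/(3c) = 1/(3 \cdot \rho/4)$. The only thing to double-check is that each achievable $(M,\rho)$ point at the network layer does indeed map to an achievable $(M,1/d)$ point at the physical layer through the real-interference-alignment scheme described in Section~\ref{sec:setup-physical}, which is precisely the content of Lemma~\ref{lemma:gaussian-dof}. Hence the bound holds for every $M\ge 0$ and the theorem follows.
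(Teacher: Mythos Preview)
Your proposal is correct and is exactly the paper's approach: the paper states that Theorem~\ref{thm:gaussian-achievability} is ``a direct combination of Lemma~\ref{lemma:gaussian-dof} and Theorem~\ref{thm:deterministic},'' and your argument spells out precisely that combination and the arithmetic simplification of $(3/4)\rho^*(M)$.
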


The optimality of $\rho^*(M)$ in the network layer implies that our strategy is optimal among all separation-based approaches with the physical-layer processing as proposed here.
In fact, it provides a net improvement over the natural layering strategies: up to a factor of $3/2$ and a factor of $2$ improvement over the X-channel and the interference channel layering schemes, respectively.
In addition, we can show that it is optimal over \emph{all possible strategies} for large receiver cache memory ($M\ge4/5$); see \appref{sec:gaussian} for more details.
Thus, in this regime, the proposed separation into a physical layer and a network layer is without loss of optimality.
We are currently working on extending these results to smaller memories.

\bibliographystyle{IEEEtran}
\bibliography{xchannel,caching}

\begin{thebibliography}{10}
\providecommand{\url}[1]{#1}
\csname url@samestyle\endcsname
\providecommand{\newblock}{\relax}
\providecommand{\bibinfo}[2]{#2}
\providecommand{\BIBentrySTDinterwordspacing}{\spaceskip=0pt\relax}
\providecommand{\BIBentryALTinterwordstretchfactor}{4}
\providecommand{\BIBentryALTinterwordspacing}{\spaceskip=\fontdimen2\font plus
\BIBentryALTinterwordstretchfactor\fontdimen3\font minus
  \fontdimen4\font\relax}
\providecommand{\BIBforeignlanguage}[2]{{%
\expandafter\ifx\csname l@#1\endcsname\relax
\typeout{** WARNING: IEEEtran.bst: No hyphenation pattern has been}%
\typeout{** loaded for the language `#1'. Using the pattern for}%
\typeout{** the default language instead.}%
\else
\language=\csname l@#1\endcsname
\fi
#2}}
\providecommand{\BIBdecl}{\relax}
\BIBdecl

\bibitem{jacobson2012}
V.~Jacobson, D.~K. Smetters, J.~D. Thornton, M.~Plass, N.~Briggs, and
  R.~Braynard, ``Networking named content,'' \emph{Commun. ACM}, vol.~55,
  no.~1, pp. 117--124, Jan. 2012.

\bibitem{maddah-ali2012}
M.~Maddah-Ali and U.~Niesen, ``Fundamental limits of caching,''
  \emph{Information Theory, IEEE Transactions on}, vol.~60, no.~5, pp.
  2856--2867, May 2014.

\bibitem{maddah-ali2015interference}
------, ``Cache-aided interference channels,'' in \emph{Information Theory
  (ISIT), 2015 IEEE International Symposium on}, June 2015, pp. 809--813.

\bibitem{PMN13}
R.~Pedarsani, M.~Maddah-Ali, and U.~Niesen, ``Online coded caching,'' in
  \emph{Communications (ICC), 2014 IEEE International Conference on}, June
  2014, pp. 1878--1883.

\bibitem{wang2015fundamental}
S.~Wang, W.~Li, X.~Tian, and H.~Liu, ``Coded caching with heterogeneous cache
  sizes,'' \emph{arXiv:1504.01123v3 [cs.IT]}, Aug. 2015.

\bibitem{zhang2015filesize}
J.~Zhang, X.~Lin, C.-C. Wang, and X.~Wang, ``Coded caching for files with
  distinct file sizes,'' in \emph{Information Theory (ISIT), 2015 IEEE
  International Symposium on}, June 2015, pp. 1686--1690.

\bibitem{ghasemi2015improved}
H.~Ghasemi and A.~Ramamoorthy, ``Improved lower bounds for coded caching,'' in
  \emph{Information Theory (ISIT), 2015 IEEE International Symposium on}, June
  2015, pp. 1696--1700.

\bibitem{sengupta2015improved}
A.~Sengupta, R.~Tandon, and T.~Clancy, ``Improved approximation of storage-rate
  tradeoff for caching via new outer bounds,'' in \emph{Information Theory
  (ISIT), 2015 IEEE International Symposium on}, June 2015, pp. 1691--1695.

\bibitem{ji2013wireless}
M.~Ji, G.~Caire, and A.~Molisch, ``Wireless device-to-device caching networks:
  Basic principles and system performance,'' \emph{Selected Areas in
  Communications, IEEE Journal on}, vol.~34, no.~1, pp. 176--189, Jan 2016.

\bibitem{shariatpanahi2015multi}
S.~P. Shariatpanahi, A.~S. Motahari, and B.~H. Khalaj, ``Multi-server coded
  caching,'' \emph{arXiv:1503.00265v1 [cs.IT]}, Mar. 2015.

\bibitem{FemtoCaching}
N.~Golrezaei, K.~Shanmugam, A.~Dimakis, A.~Molisch, and G.~Caire,
  ``Femtocaching: Wireless video content delivery through distributed caching
  helpers,'' in \emph{INFOCOM, 2012 Proceedings IEEE}, March 2012, pp.
  1107--1115.

\bibitem{MLcodedcaching}
J.~Hachem, N.~Karamchandani, and S.~Diggavi, ``Multi-level coded caching,'' in
  \emph{Proceedings of the IEEE International Symposium on Information Theory
  (ISIT)}, Jun. 2014.

\bibitem{HKDinfocom15}
------, ``Content caching and delivery over heterogeneous wireless networks,''
  in \emph{Computer Communications (INFOCOM), 2015 IEEE Conference on}, April
  2015, pp. 756--764.

\bibitem{Hcodedcaching}
N.~Karamchandani, U.~Niesen, M.~Maddah-Ali, and S.~Diggavi, ``Hierarchical
  coded caching,'' in \emph{Information Theory (ISIT), 2014 IEEE International
  Symposium on}, June 2014, pp. 2142--2146.

\bibitem{motahari2014alignment}
A.~Motahari, S.~Oveis-Gharan, M.-A. Maddah-Ali, and A.~Khandani, ``Real
  interference alignment: Exploiting the potential of single antenna systems,''
  \emph{Information Theory, IEEE Transactions on}, vol.~60, no.~8, pp.
  4799--4810, Aug 2014.

\bibitem{niesen_whiting2012}
U.~Niesen and P.~Whiting, ``The degrees of freedom of compute-and-forward,''
  \emph{Information Theory, IEEE Transactions on}, vol.~58, no.~8, pp.
  5214--5232, Aug 2012.

\bibitem{cover}
T.~M. Cover and J.~A. Thomas, \emph{Elements of Information Theory (Wiley
  Series in Telecommunications and Signal Processing)}.\hskip 1em plus 0.5em
  minus 0.4em\relax Wiley-Interscience, 2006.

\end{thebibliography}

\ifextended
\appendices

\section{Proof of Theorem~\ref{thm:deterministic}}
\label{sec:deterministic}

In the Appendix, we prove Theorem~\ref{thm:deterministic}, which describes the optimal trade-off between $\rho$ and $M$.
To do that, we first propose an achievable scheme for the setup with trade-off $\rho(M)$, and then give information-theoretic outer bounds on the optimal trade-off $\rho^*(M)$ that match the value of $\rho(M)$.
We formalize this in the following two lemmas.
\begin{lemma}
\label{lemma:deterministic-achievability}
The following trade-off is achievable:
\[
\rho^*(M) \le \max\left\{
2-2M,
\frac{12}{7}-\frac87M,
\frac43-\frac23M,
0
\right\}.
\]
\end{lemma}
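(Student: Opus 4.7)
The plan is to exploit the convexity of $\rho^*(M)$ and reduce the proof to verifying achievability at the four corner points $(M,\rho) \in \{(0,2),\,(1/3,4/3),\,(4/5,4/5),\,(2,0)\}$ of the piecewise-linear upper envelope. Convexity follows from a standard memory- and time-sharing argument: given two schemes realizing $(M_1,\rho_1)$ and $(M_2,\rho_2)$, split each file into two complementary fractions and apply the two schemes independently on the pieces; the resulting scheme achieves any convex combination. Hence it is enough to construct a placement-and-delivery strategy for each corner.

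The extreme points $(0,2)$ and $(2,0)$ are immediate. At $M=0$, the transmitters route the two files as four disjoint halves through the four bit pipes, giving $c = 1/2$ and $\rho=2$. At $M=2$, each receiver cache stores both files in full, so the delivery phase is empty and $\rho=0$.

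The substantive content is to make the schemes sketched in Section~\ref{sec:performance-network} for $(1/3,4/3)$ and $(4/5,4/5)$ complete. The excerpt describes the delivery only for the demand pair $(W_1,W_2)=(A,B)$; for a full proof I would keep the placement fixed as stated and specify a delivery for each of the four possible demand pairs $(A,A)$, $(A,B)$, $(B,A)$, $(B,B)$, then verify decodability at both users. For the point $(1/3,4/3)$ this is straightforward because the placement is symmetric in $A$ and $B$ and in the two users: the $(B,A)$ delivery is obtained from $(A,B)$ by swapping file labels and user indices, and the same-demand pairs $(A,A)$ and $(B,B)$ are handled by delivering the two missing thirds of the common requested file through the four pipes, making use of the fact that each receiver already holds one third of the requested file via its stored XOR.

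The main obstacle is the case analysis at $(4/5,4/5)$. The transmitter caches store $B_5 \oplus \mathcal{S}$ and $A_5 \oplus \mathcal{T}$, where each of $\mathcal{S}=\{S_1,\ldots,S_4\}$ and $\mathcal{T}=\{T_1,\ldots,T_4\}$ is a set of four specific XOR combinations of subfiles of $A$ and $B$. For each demand pair the transmitters select one element $S_i$ and one element $T_j$, and send $V_1,\ldots,V_4 \in \{A_5, B_5, A_5 \oplus T_j, B_5 \oplus S_i\}$ of size $F/5$ each. The crux of the proof is to tabulate, for each of the four demand pairs, which specific $(S_i,T_j)$ to use so that the three quantities delivered to user $1$ (namely $V_1$, $V_3$, and $V_2 \oplus V_4$) combined with $Z_1$ suffice to recover the two missing subfiles of its requested file, and symmetrically for user $2$. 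The four combinations in $\mathcal{S}$ and the four in $\mathcal{T}$ have been chosen precisely so that every such pairing of ``subfile the user needs'' with ``subfile the other user has cached'' is available; verifying this is a routine but tedious case check and is where the bulk of the formal proof resides.
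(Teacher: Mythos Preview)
Your approach is the same as the paper's: reduce to the four corner points via memory-sharing, then exhibit explicit placement/delivery at each of $(0,2)$, $(1/3,4/3)$, $(4/5,4/5)$, $(2,0)$. The trivial corners and the cross-demand deliveries match the paper. However, your handling of the \emph{same-demand} cases is inaccurate in both nontrivial schemes and would not go through as written.

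At $(1/3,4/3)$ with demands $(A,A)$, you claim each receiver ``already holds one third of the requested file via its stored XOR.'' It does not: $Z_1=A_1\oplus B_1$ carries no information about $A$ by itself, and since no part of $B$ is being delivered, the receiver caches are useless for this demand pair. The paper's scheme for $(A,A)$ ignores the receiver caches entirely and uses all $3\cdot F/3$ bits of channel output at each user to reconstruct all of $A$ (for user~1: $V_1=A_3$, $V_3=A_1\oplus A_3$, $V_2\oplus V_4=A_2$). So you must deliver three thirds, not two.

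At $(4/5,4/5)$ you assert that for every demand pair the transmitters select one $S_i$ and one $T_j$ and send $V_1,\ldots,V_4\in\{A_5,B_5,A_5\oplus T_j,B_5\oplus S_i\}$. This pattern holds only for the cross demands $(A,B)$ and $(B,A)$. For $(A,A)$ the scheme uses two of the $T_j$'s and none of the $S_i$'s (e.g., $V_1=V_2=A_5$, $V_3=A_5\oplus T_1$, $V_4=A_5\oplus T_2$), and symmetrically $(B,B)$ uses two $S_i$'s and no $T_j$. This is why $\mathcal{S}$ and $\mathcal{T}$ each contain four elements rather than two: two are consumed by the cross demands and two by the matching same-demand case. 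Your tabulation would need to reflect this.
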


\begin{lemma}
\label{lemma:deterministic-converse}
The optimal trade-off $\rho^*(M)$ must satisfy:
\[
\rho^*(M) \ge \max\left\{
2-2M,
\frac{12}{7}-\frac87M,
\frac43-\frac23M,
0
\right\}.
\]
\end{lemma}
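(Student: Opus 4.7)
The plan is to establish each of the four lower bounds appearing in \eqref{eq:deterministic-optimal} separately. The bound $\rho^*(M)\ge 0$ is trivial. The outer bounds $\rho^*(M)\ge 2-2M$ and $\rho^*(M)\ge 4/3-2M/3$ follow from cut-set applications of Fano's inequality to carefully chosen unions of variables. The middle bound $\rho^*(M)\ge 12/7-8M/7$, equivalent to $7c+2M\ge 3$, is the main obstacle: it is strictly stronger than the best available cut-set bound and requires the non-cut-set entropy argument sketched in the overview.

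For $4c+2M\ge 2$, I fix demand $(A,B)$. Both users decode correctly from their outputs and caches, and the outputs are functions of $\mathbf{V}^{AB}$, so $(Z_1,Z_2,\mathbf{V}^{AB})$ determines $(A,B)$ up to vanishing error and Fano then gives $2F=H(A,B)\le 2MF+4cF+o(F)$. For $6c+M\ge 2$, I keep the cache placement fixed and consider two demand scenarios differing only in user~$1$'s request, namely $(A,\cdot)$ and $(B,\cdot)$. User~$1$ decodes its file in each scenario from $Z_1$ and its three received messages, so the set $(Z_1,\mathbf{Y}_1^{A\cdot},\mathbf{Y}_1^{B\cdot})$ determines $(A,B)$ and yields $2F\le MF+6cF+o(F)$.

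For $7c+2M\ge 3$, I follow the overview at the end of Section~\ref{sec:performance-network}. The naive cut-set over the set $(Z_1,Z_2,\mathbf{Y}^A,\mathbf{V}^{BA})$, whose total size is $2MF+7cF$ and which does determine $(A,B)$, only gives $7c+2M\ge 2$ since $H(A,B)=2F$. The refinement is to apply Fano separately to the three decoding events $(Z_1,\mathbf{Y}^A)\to A$, $(Z_2,\mathbf{V}^{BA})\to A$, and $(Z_1,\mathbf{V}^{BA})\to B$, yielding three inequalities whose left-hand sides sum to $3F$. The task is then to combine their right-hand sides tightly, so that the summed bound collapses to the total size $2MF+7cF$ of the underlying variable set, rather than the naive sub-additive bound $3MF+11cF$ that would yield only the weaker $11c+3M\ge 3$.

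The main obstacle is exactly this combination step. A direct submodularity argument does not suffice: one needs to eliminate the double-counting of $Z_1$ (shared by the first and third decodings, worth $M$) and of $\mathbf{V}^{BA}$ (shared by the second and third decodings, worth $4c$) entirely. I expect the correct argument to proceed via chain-rule expansions $H(Z_1,\mathbf{Y}^A)=H(Z_1)+H(\mathbf{Y}^A\mid Z_1)$ and similar, together with the structural constraint that the pairs $(V_1,V_2)$ and $(V_3,V_4)$ in $\mathbf{V}^{BA}$ are functions of the transmitter caches $U_1$ and $U_2$ (each of entropy at most $F$), and with the X-channel coupling that ties the XOR component of $\mathbf{Y}^A$ back to those same caches. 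Once $7c+2M\ge 3$ is established this way, assembling the four inequalities immediately gives the lemma.
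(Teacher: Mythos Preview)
Your plan for the two cut-set bounds $4c+2M\ge 2$ and $6c+M\ge 2$ is correct and matches the paper. The gap is in the key bound $7c+2M\ge 3$: you have correctly identified the three decoding events and the goal of collapsing their combined cost to the size $2MF+7cF$ of the set $(Z_1,Z_2,\mathbf{Y}^{A},\mathbf{V}^{BA})$, but the mechanism you propose for that collapse is off target. Neither the transmitter-cache size constraint $H(U_i)\le F$ nor any ``X-channel coupling'' between $\mathbf{Y}^{A}$ and $\mathbf{V}^{BA}$ is used (or, as far as I can see, usable) here; those ingredients do not produce the needed cancellation of the double-counted $Z_1$ and $\mathbf{V}^{BA}$.

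The paper's device is considerably simpler and sidesteps the double-counting issue from the outset. Rather than starting from three Fano inequalities and trying to compress their sum, one starts from \emph{two} entropy terms that already have no overlap,
\[
7cF+2MF \;\ge\; H\!\left(Z_1,\mathbf{Y}^{AB}\right) + H\!\left(Z_2,\mathbf{V}^{BA}\right),
\]
and writes each as $H(\,\cdot\mid A)+I(A;\,\cdot\,)$. The two mutual-information terms each contribute $F-o(F)$ by Fano (user~1 decodes $A$ from $(Z_1,\mathbf{Y}^{AB})$; user~2 decodes $A$ from $(Z_2,\mathbf{V}^{BA})$). The two conditional-entropy terms are then merged by plain subadditivity,
\[
H\!\left(Z_1,\mathbf{Y}^{AB}\mid A\right)+H\!\left(Z_2,\mathbf{V}^{BA}\mid A\right)\;\ge\;H\!\left(Z_1,Z_2,\mathbf{Y}^{AB},\mathbf{V}^{BA}\mid A\right),
\]
and the right-hand side is at least $I\!\left(B;Z_1,Z_2,\mathbf{Y}^{AB},\mathbf{V}^{BA}\mid A\right)\ge F-o(F)$, since $(Z_1,\mathbf{V}^{BA})$ already lets user~1 decode $B$. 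Summing gives $7c+2M\ge 3$. The point you were missing is that the third decoding event should not be introduced as a separate Fano term to be recombined, but extracted \emph{after} conditioning on $A$ and merging; that is what makes the bookkeeping close with no slack.
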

Proving these two lemmas is sufficient to prove Theorem~\ref{thm:deterministic}.

\begin{proof}[Proof of Lemma~\ref{lemma:deterministic-achievability}]
To prove Lemma~\ref{lemma:deterministic-achievability}, it suffices to show the achievability of a few $(M,\rho)$ corner points, namely:
\begin{equation*}
(0,2);\quad
(1/3,4/3);\quad
(4/5,4/5);\quad
(2,0).
\end{equation*}
The rest follows using memory-sharing, because $\rho^*(M)$ is a convex function of $M$.
In particular, if two points $(M_1,\rho_1)$ and $(M_2,\rho_2)$ are achievable, then so are:
\[
\left( \lambda M_1+(1-\lambda)M_2,\lambda\rho_1+(1-\lambda)\rho_2 \right),
\]
for all $\lambda\in[0,1]$.

\textbf{Achievability of $(M,\rho)=(0,2)$:}
When $M=0$ the receiver caches are empty.
Split each file into two halves, $A=(A_1,A_2)$ and $B=(B_1,B_2)$, and store them in the transmitter caches as follows: $U_1=(A_1,B_1)$ and $U_2=(A_2,B_2)$.
Note that the $U_i$'s are consistent with the constraint that their size must not exceed that of one file.
Suppose now that the users request files $(W,W')$, where $W,W'\in\{A,B\}$.
Then, the transmitters send the following messages: $V_1=W_1$, $V_2=W_1'$, $V_3=W_2$, and $V_4=W_2'$.
Notice that messages $V_1$ and $V_2$ depend only on $U_1$, while $V_3$ and $V_4$ depend exclusively on $U_2$.

User~1 now receives $(V_1,V_2\oplus V_4,V_3)=(W_1,W_1'\oplus W_2',W_2)$, which allows it to reconstruct $W$.
Similarly, user~2 receives $(V_2,V_1\oplus V_3,V_4)=(W_1',W_1\oplus W_2,W_2')$, which it can use to recover $W'$.
Thus all the demands have been satisfied, and each message sent had a size of exactly half a file, i.e., $cF=F/2$.
Equivalently, $\rho=4c=2$, and the point is achieved.

\textbf{Achievability of $(M,\rho)=(1/3,4/3)$:}
Here, $MF=F/3$, so each receiver cache can store the equivalent of a third of a file.
We start by splitting each file into three parts: $A=(A_1,A_2,A_3)$ and $B=(B_1,B_2,B_3)$.
We then store the following in the receiver caches: $Z_1=(A_1\oplus B_1)$ and $Z_2=(A_2\oplus B_2)$.
This satisfies $M=1/3$.
In the transmitter caches, we place: $U_1=(A_3,B_1\oplus B_3,B_2\oplus B_3)$ and $U_2=(B_3,A_1\oplus A_3,A_2\oplus A_3)$.
Again, the $U_i$'s hold the equivalent of one file, which is consistent with the problem setup.
We show in \tablename~\ref{tbl:delivery-m13} what messages the transmitters should send, for all four possible user demands $(W_1,W_2)\in\{A,B\}^2$, where user~1 requests $W_1$ and user~2 requests $W_2$.
The rows are highlighted in different colors to emphasize which user has access to which information: blue for user~1 and pink for user~2.
Notice that the size of every $V_i$ in the table is exactly one third of a file, which means $cF=F/3$ and $\rho=4c=4/3$ is achievable.

\begin{table}
\centering
\caption{Achievable strategy for $M=1/3$.}
\label{tbl:delivery-m13}
\begin{tabular}{|c|c|c|c|c|c|}
\hline
Cache & \multicolumn{4}{c|}{Content} & User\\
\hline
\rowcolor{user1}
$Z_1$ & \multicolumn{4}{>{\columncolor{user1}}c|}{$A_1\oplus B_1$} & 1\\
\rowcolor{user2}
$Z_2$ & \multicolumn{4}{>{\columncolor{user2}}c|}{$A_2\oplus B_2$} & 2\\
$U_1$ & \multicolumn{4}{c|}{$A_3,B_1\oplus B_3,B_2\oplus B_3$} & N/A\\
$U_2$ & \multicolumn{4}{c|}{$B_3,A_1\oplus A_3,A_2\oplus A_3$} & N/A\\
\hline
\hline
& \multicolumn{4}{c|}{Demands $(W_1,W_2)$} &\\
Message & $(A,A)$ & $(A,B)$ & $(B,A)$ & $(B,B)$ & User\\
\hline
\rowcolor{user1}
$V_1$ & $A_3$ & $A_3$ & $B_2\oplus B_3$ & $B_1\oplus B_3$ & 1\\
\rowcolor{user2}
$V_2$ & $A_3$ & $B_1\oplus B_3$ & $A_3$ & $B_2\oplus B_3$ & 2\\
\rowcolor{user1}
$V_3$ & $A_1\oplus A_3$ & $A_2\oplus A_3$ & $B_3$ & $B_3$ & 1\\
\rowcolor{user2}
$V_4$ & $A_2\oplus A_3$ & $B_3$ & $A_1\oplus A_3$ & $B_3$ & 2\\
\hline
\rowcolor{user1}
$V_2\oplus V_4$ & $A_2$ & $B_1$ & $A_1$ & $B_2$ & 1\\
\rowcolor{user2}
$V_1\oplus V_3$ & $A_1$ & $A_2$ & $B_2$ & $B_1$ & 2\\
\hline
\end{tabular}
\end{table}

\textbf{Achievability of $(M,\rho)=(4/5,4/5)$:}
When $MF=4F/5$, we split each file into five parts: $A=(A_1,\ldots,A_5)$ and $B=(B_1,\ldots,B_5)$.
The placement and delivery are shown in \tablename~\ref{tbl:delivery-m45}, where we have used the following symbols for short:
\begin{IEEEeqnarray*}{rCl'rCl'rCl'rCl}
S_1 &=& B_2\oplus A_4 &
S_2 &=& A_1\oplus B_3 &
S_3 &=& B_1\oplus B_3 &
S_4 &=& B_2\oplus B_4\\
T_1 &=& A_1\oplus A_3 &
T_2 &=& A_2\oplus A_4 &
T_3 &=& B_1\oplus A_3 &
T_4 &=& A_2\oplus B_4
\end{IEEEeqnarray*}
For example, when the demands are $(A,B)$, then user~1 receives $A_5$, $A_5\oplus T_1$, and $S_1$.
This gives it $A_5$ and $S_1=B_2\oplus A_4$, and allows it to decode $T_1=A_1\oplus A_3$.
Using these and the contents of $Z_1=(A_1,A_2,B_1,B_2)$, the user can reconstruct file~$A$.

Notice that the size of each $Z_i$ is $4F/5$ bits, the size of each $U_i$ is $F$ bits, and the size of each $V_i$ is $F/5$ bits, which implies that $\rho=4/5$ is achievable when $M=4/5$.

\begin{table}
\centering
\caption{Achievable strategy for $M=4/5$.}
\label{tbl:delivery-m45}
\begin{tabular}{|c|c|c|c|c|c|}
\hline
Cache & \multicolumn{4}{c|}{Content} & User\\
\hline
\rowcolor{user1}
$Z_1$ & \multicolumn{4}{>{\columncolor{user1}}c|}{$A_1,A_2,B_1,B_2$} & 1\\
\rowcolor{user2}
$Z_2$ & \multicolumn{4}{>{\columncolor{user2}}c|}{$A_3,A_4,B_3,B_4$} & 2\\
$U_1$ & \multicolumn{4}{c|}{$A_5,B_5\oplus S_1,B_5\oplus S_2,B_5\oplus S_3,B_5\oplus S_4$} & N/A\\
$U_2$ & \multicolumn{4}{c|}{$B_5,A_5\oplus T_1,A_5\oplus T_2,A_5\oplus T_3,A_5\oplus T_4$} & N/A\\
\hline
\hline
& \multicolumn{4}{c|}{Demands $(W_1,W_2)$} &\\
Message & $(A,A)$ & $(A,B)$ & $(B,A)$ & $(B,B)$ & User\\
\hline
\rowcolor{user1}
$V_1$ & $A_5$ & $A_5$ & $B_5\oplus S_2$ & $B_5\oplus S_3$ & 1\\
\rowcolor{user2}
$V_2$ & $A_5$ & $B_5\oplus S_1$ & $A_5$ & $B_5\oplus S_4$ & 2\\
\rowcolor{user1}
$V_3$ & $A_5\oplus T_1$ & $A_5\oplus T_3$ & $B_5$ & $B_5$ & 1\\
\rowcolor{user2}
$V_4$ & $A_5\oplus T_2$ & $B_5$ & $A_5\oplus T_4$ & $B_5$ & 2\\
\hline
\rowcolor{user1}
$V_2\oplus V_4$ & $T_2$ & $S_1$ & $T_4$ & $S_4$ & 1\\
\rowcolor{user2}
$V_1\oplus V_3$ & $T_1$ & $T_3$ & $S_2$ & $S_3$ & 2\\
\hline
\end{tabular}
\end{table}

\textbf{Achievability of $(M,\rho)=(2,0)$:}
In this situation, $MF=2F$ allows each user to store both files in its cache.
Therefore, the receiver caches can completely handle the requests, and no messages need ever be sent through the channel.
Hence, $\rho=4c=0$ is achieved.
\end{proof}

\begin{proof}[Proof of Lemma~\ref{lemma:deterministic-converse}]
To prove the lemma, we must show that all the following inequalities hold for any achievable $(M,\rho)$ pair:
\begin{IEEEeqnarray*}{rCl}
\rho &\ge& 2-2M;\\
\rho &\ge& \frac{12}{7}-\frac87M;\\
\rho &\ge& \frac43-\frac23M;\\
\rho &\ge& 0.
\end{IEEEeqnarray*}
The last inequality is trivial.
By using $\rho=4c$, we can rewrite the first three inequalities as:
\begin{IEEEeqnarray}{rCrCl}
\IEEEyesnumber
\label{eq:det-converse-ineqs}
\IEEEyessubnumber
4c &+& 2M &\ge& 2;\label{eq:det-converse-ineq1}\\
\IEEEyessubnumber
7c &+& 2M &\ge& 3;\label{eq:det-converse-ineq2}\\
\IEEEyessubnumber
6c &+&  M &\ge& 2.\label{eq:det-converse-ineq3}
\end{IEEEeqnarray}
Interestingly, inequalities \eqref{eq:det-converse-ineq1} and \eqref{eq:det-converse-ineq3} can be proved using cut-set bounds, but inequality \eqref{eq:det-converse-ineq2} requires non-cut-set-bound arguments.
In proving these inequalities, we introduce the following notation: we use $V_i^{W_1W_2}$ to refer to the input message $V_i$ when user~1 requests file $W_1$ and user~2 requests file $W_2$, where $W_1,W_2\in\{A,B\}$.

To prove \eqref{eq:det-converse-ineq1}, suppose user~1 requests file $A$ and user~2 requests file $B$.
The argument is that all four input messages $(V_1^{AB},\ldots,V_4^{AB})$, each of size $cF$ bits, combined with the two receiver caches $Z_1$ and $Z_2$ of size $MF$ bits each, should contain at least enough information to decode both files, for a total of $2F$ bits.
Thus, $4cF+2MF\ge2F$.
We formalize this using Fano's inequality:
\begin{IEEEeqnarray*}{rCl}
4cF + 2MF
&\ge& H\left( Z_1,Z_2, V_1^{AB},V_2^{AB},V_3^{AB},V_4^{AB} \right)\\
&=& H\left( Z_1,Z_2, V_1^{AB},V_2^{AB},V_3^{AB},V_4^{AB} \middle| A,B \right)\\
&&{} + I\left( A,B;Z_1,Z_2,V_1^{AB},V_2^{AB},V_3^{AB},V_4^{AB} \right)\\
&\ge& H\left( A,B \right)\\
&&{} - H\left( A,B \middle| Z_1,Z_2,V_1^{AB},V_2^{AB},V_3^{AB},V_4^{AB} \right)\\
&\ge& 2F - \epsilon F,
\end{IEEEeqnarray*}
where $\epsilon\to0$ as $F\to\infty$.
Therefore, $4c+2M\ge2$.

For \eqref{eq:det-converse-ineq3}, consider only user~1, requesting both files $A$ and $B$ over two uses of the system.
By using the single cache $Z_1$, and all three outputs of the system $V_1$, $V_3$, and $V_2\oplus V_4$ \emph{twice} (once for each requested file), the user should be able to decode both files.
For simplicity, let $\mathbf{Y}^{W_1W_2}=(V_1^{W_1W_2},V_2^{W_1W_2}\oplus V_4^{W_1W_2},V_3^{W_1W_2})$.
Formally:
\begin{IEEEeqnarray*}{rCl}
6cF + MF
&\ge& H\left( Z_1, \mathbf{Y}^{AB}, \mathbf{Y}^{BA} \right)\\
&\ge& I\left( A,B ; Z_1, \mathbf{Y}^{AB}, \mathbf{Y}^{BA} \right)\\
&=& H\left( A,B \right)
- H\left( A,B \middle| Z_1, \mathbf{Y}^{AB},\mathbf{Y}^{BA} \right)\\
&\ge& 2F - \epsilon F.
\end{IEEEeqnarray*}
Therefore, $6c+M\ge2$.

Finally, inequality \eqref{eq:det-converse-ineq2} requires combining two cut-set bounds.
The first one combines the three output messages of user~1 with its cache to decode file $A$.
In parallel, the second one combines the four input messages with the cache of user~2 to also decode file $A$.
Then, both cut-set bounds are ``merged'' to decode file $B$.
For convenience, let $\mathbf{Y}^{W_1W_2}$ be as defined above, and let $\mathbf{V}^{W_1W_2}=(V_1^{W_1W_2},V_2^{W_1W_2},V_3^{W_1W_2},V_4^{W_1W_2})$.
Formally:
\begin{IEEEeqnarray*}{rCl}
7cF + 2MF
&\ge& H\left( Z_1,\mathbf{Y}^{AB} \right)
+ H\left( Z_2,\mathbf{V}^{BA} \right)\\
&=& H\left( Z_1,\mathbf{Y}^{AB} \middle| A \right)
+ I\left( A ; Z_1,\mathbf{Y}^{AB} \right)\\
&&{} + H\left( Z_2,\mathbf{V}^{BA} \middle| A \right)
+ I\left( A ; Z_2,\mathbf{V}^{BA} \right)\\
&\ge& H\left( Z_1,Z_2,\mathbf{Y}^{AB},\mathbf{V}^{BA} \middle| A \right)\\
&&{} + 2F - 2\epsilon F\\
&\ge& I\left( B ; Z_1,Z_2,\mathbf{Y}^{AB},\mathbf{V}^{BA} \middle| A \right)\\
&&{} + 2F-2\epsilon F\\
&\ge& 3F - 3\epsilon F.
\end{IEEEeqnarray*}
Therefore, $7c+2M\ge3$.
\end{proof}

\section{Converse Results for the End-to-End Problem}
\label{sec:gaussian}

In this Appendix, we provide and prove a lower bound on the optimal trade-off between the inverse DoF and the cache memory.
This lower bound matches the achievable inverse DoF from Theorem~\ref{thm:gaussian-achievability} for $M\ge4/5$.

\begin{theorem}
\label{thm:gaussian-converse}
The optimal inverse DoF obeys the following inequality:
\[
\frac{1}{d^*(M)} \ge \max\left\{ 1 - \frac12M , 0 \right\}.
\]
\end{theorem}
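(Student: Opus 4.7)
The plan is to derive a single-user cut-set bound at receiver~1 that exploits the asymmetry between cache and channel: the cache content $Z_1$ is fixed during placement and reused across \emph{all} demands, while the channel outputs are fresh for each delivery instance. Concretely, I will force receiver~1 to decode \emph{both} files $A$ and $B$ using its single cache $Z_1$ together with \emph{two} independent channel-output sequences from two separate delivery runs.

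First, I would consider two delivery instances: in the first, the demand pair is $(A,W_2)$ for an arbitrary $W_2$, producing a channel-output sequence $Y_1^T(A)$ at receiver~1; in the second, the demands are $(B,W_2')$, producing $Y_1^T(B)$. Reliability of the scheme combined with Fano's inequality gives $H(A\mid Z_1,Y_1^T(A))\le\epsilon F$ and $H(B\mid Z_1,Y_1^T(B))\le\epsilon F$, hence $H(A,B\mid Z_1,Y_1^T(A),Y_1^T(B))\le 2\epsilon F$. Expanding $2F=H(A,B)$ and applying the chain rule yields
\[
2F \le H(Z_1) + H(Y_1^T(A)) + H(Y_1^T(B)) + 2\epsilon F \le MF + 2\,H(Y_1^T) + 2\epsilon F.
\]
Next, the power constraint together with the maximum-entropy property of the Gaussian bound gives $H(Y_1^T)\le (T/2)\log\!\bigl(2\pi e(1+(h_{11}^2+h_{12}^2)P)\bigr)=(T/2)\log P+o(T\log P)$. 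Substituting $F=RT$, dividing by $T$, and letting first $F\to\infty$ (so $\epsilon\to 0$) and then $P\to\infty$ yields $d^*(M)\le 2/(2-M)$, i.e., $1/d^*(M)\ge 1-M/2$. The bound $1/d^*(M)\ge 0$ is immediate.

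The main obstacle is essentially careful bookkeeping: combining the Fano error (vanishing with $F$) with the high-SNR limit (in $P$), and making sure the two delivery runs produce \emph{independent} noise realizations so that $Y_1^T(A)$ and $Y_1^T(B)$ each contribute at most $\tfrac12\log P+o(\log P)$ of information. Once this accounting is done, the argument is a standard single-receiver cut-set bound in which the cache is counted only \emph{once} but the channel is counted \emph{twice}, which is precisely what sharpens the naive one-shot bound $1/d\ge 1-M$ into $1/d\ge 1-M/2$ and matches the achievability of Theorem~\ref{thm:gaussian-achievability} for $M\ge 4/5$.
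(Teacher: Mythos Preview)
Your proposal is correct and follows essentially the same route as the paper: a single-receiver cut-set bound in which $Z_1$ is counted once while the channel output at user~1 is counted over two delivery runs (demands $(A,\cdot)$ and $(B,\cdot)$), then Fano plus a point-to-point/MAC capacity bound on each $Y_1^T$ yields $d^*(M)(2-M)\le 2$. The only cosmetic difference is that the paper bounds the channel term via $I(\mathbf{X};\mathbf{Y})$ and a MAC sum-rate bound rather than directly via the Gaussian maximum-entropy bound on $h(Y_1^T)$; in the DoF limit these are equivalent, though you should be careful in the write-up not to mix discrete entropies ($H(A,B)$, $H(Z_1)$) with differential entropies ($h(Y_1^T)$) without passing through a mutual-information step.
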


\begin{proof}
Since $1/d^*(M)\ge0$ is trivial, we are left with proving:
\[
\frac{1}{d^*(M)}\ge1-\frac12M.
\]
As with proving \eqref{eq:det-converse-ineq3} in Appendix~\ref{sec:deterministic}, we want to use the fact that a single user should be able to decode both files when using the channel twice.

For convenience, define $X_i^T(W_1,W_2)$ and $Y_j^T(W_1,W_2)$ as the $X_i^T$ and $Y_j^T$ when the requests are $(W_1,W_2)$.
Also define $\mathbf{X}^{W_1W_2}=(X_1^T(W_1,W_2),X_2^T(W_1,W_2))$ and $\mathbf{Y}^{W_1W_2}=Y_1^T(W_1,W_2)$.
\begin{IEEEeqnarray*}{rCl}
2RT
&=& H\left( A,B \right)\\
&=& H\left( A,B \middle| Z_1,\mathbf{Y}^{AB},\mathbf{Y}^{BA} \right)
+ I\left( A,B ; Z_1,\mathbf{Y}^{AB},\mathbf{Y}^{BA} \right)\\
&\le& \epsilon RT
+ I\left( A,B ; Z_1,\mathbf{Y}^{AB},\mathbf{Y}^{BA} \right)\\
&=& \epsilon RT + I\left( A,B ; \mathbf{Y}^{AB},\mathbf{Y}^{BA} \right)\\
&&{} + I\left( A,B ; Z_1 \middle| \mathbf{Y}^{AB},\mathbf{Y}^{BA} \right)\\
&\le& \epsilon RT + I\left( \mathbf{X}^{AB},\mathbf{X}^{BA} ; \mathbf{Y}^{AB},\mathbf{Y}^{BA} \right) + H\left( Z_1 \right)\\
&\le& \epsilon RT + 2 \cdot \frac12\log P \cdot T + MRT.
\end{IEEEeqnarray*}
The last inequality uses the MAC channel bound applied in two instances (demands $(A,B)$ and $(B,A)$).
By taking $T\to\infty$, we get
\[
R\cdot(2-M) \le 2 \cdot \frac12\log P.
\]
Therefore, the optimal DoF must satisfy
\[
d^*(M)\cdot(2-M) \le 2,
\]
which proves the theorem.
\end{proof}

\fi

\end{document}